\newtheorem{problem}{Problem}
\renewcommand{\@biblabel}[1]{\quad#1.}
\renewcommand{\footrule}{\hrule height 2pt \vspace{2mm}}
\definecolor{Gray}{gray}{.25}
\newtheorem{theorem}{Theorem}
\newtheorem{lemma}{Lemma}
\newenvironment{proof}{
    \vspace{-1.5em}
    \paragraph{\normalfont\itshape Proof.}
    \normalfont
}{
    \hfill$\square$
    \vspace{1em}
}
\newcommand{\keywords}[1]{
\begin{flushleft}
\textbf{Keywords:} #1
\end{flushleft}
}
\begin{document}
\pagestyle{plain}
\pagestyle{myheadings}
\fancyhf{}
\rfoot{\thepage/\pageref{LastPage}}
\renewcommand{\footrule}{\hrule height 2pt \vspace{2mm}}
\fancyheadoffset[L]{2.25in}
\fancyfootoffset[L]{2.25in}
\vspace*{0.35in}

% title goes here:
\begin{center}
{\Large
\textbf\newline{RatGene: Gene deletion-addition algorithms using growth to production ratio for growth-coupled production in constraint-based metabolic networks} 
}
\newline
% authors go here:
\\
Yier Ma$^{1,2}$, Takeyuki Tamura$^{1,2}$ \\
\end{center}

\begin{flushleft}
$^1$Bioinformatics Center, Institute for Chemical Research,
Kyoto University, Kyoto, Japan \\
$^2$Graduate School of Informatics, Kyoto University, Kyoto, Japan \\
\texttt{Email: mayier@kuicr.kyoto-u.ac.jp, tamura@kuicr.kyoto-u.ac.jp}
\end{flushleft}
\bigskip

\justifying
\section*{Abstract}
In computational metabolic design, it is often necessary to modify the original constraint-based metabolic networks to lead to growth-coupled production, where cell growth forces target metabolite production. 
However, in genome-scale models, finding strategies to simultaneously delete and add genes to induce growth-coupled production is challenging. 
This is particularly true when heavy computation is necessary due to numerous gene deletions and additions.
In this study, we mathematically defined related problems, proved NP-hardness and/or NP-completeness, and developed an algorithm named RatGene that (1) automatically integrates multiple constraint-based metabolic networks, (2) identifies gene deletion-addition strategies by a growth-to-production ratio-based approach, and
(3) eliminates redundant gene additions and deletions.
The results of computational experiments demonstrated that the RatGene-based approach can significantly improve the success ratio for identifying the strategies for growth-coupled production.
RatGene can facilitate a more rational approach to computational metabolic design for the production of useful substances using microorganisms by concurrently considering both gene deletions and additions.

\keywords{Biochemistry, Integer linear programming, Constraint optimization}

\section{Introduction}
Computational metabolic design requires adjusting the metabolic functions of microorganisms to create efficient biochemical pathways to produce beneficial substances. 
By adding or deleting genes, modifying enzymes, or adjusting cellular processes, the production of food, biofuels, large-scale chemicals, pharmaceuticals, and other bioproducts can be enhanced \cite{lee2015systems,nielsen2016engineering,peralta2012microbial}. 
Understanding the core principles of metabolism and cell activities is critical in this field, and heavily relies on mathematical modeling techniques. These techniques involve creating mathematical models that depict the behavior of metabolic systems. Such models allow us to simulate and predict how these systems respond under various conditions, such as the creation of a new metabolic pathway or shifts in nutrient availability.

{\bf Constraint-based models} of metabolic networks are often used in the design of genome-scale metabolic networks for growth-coupled production. 
This model assumes a balanced metabolism characterized by stable flux distributions and consistent metabolite concentrations. 
This implies that the total production rates are equal to the total consumption rates for each metabolite.
This assumption simplifies the modeling and analysis of metabolic systems, promoting the use of mathematical models to design genome-scale metabolic networks for growth-coupled production.
In constraint-based models, reactions are classified into two categories: internal and external reactions. 
Internal reactions simultaneously produce and consume metabolites, while external reactions either consume or produce metabolites, but not both at the same time.
For each reaction rate, lower and upper bounds are provided, with negative values being allowed for reversible reactions.

The {\bf cell growth reaction} and the {\bf target metabolite production reaction} are of particular interest in the constraint-based models. 
The cell growth reaction is a virtual reaction designed to match the cell growth rate of the biological experiments. 
It reflects the efficient conversion of absorbed resources into cellular energy and chemical constituents, and the facilitation of cellular evolution under selection pressure \cite{maranas2016optimization}. 
The constraint-based models prioritize maximizing cell growth for this reason. The reaction generating the target metabolite is called the target metabolite production reaction.
Such analysis based on the constraint-based models is called flux balance analysis (FBA), which limits biomass production and nutrient availability, as well as focuses on target chemical synthesis. The aim of the FBA-based methods is to investigate and optimize metabolic flux distributions \cite{orth2010flux}.

A common computational task in metabolic engineering is to design constraint-based models for {\bf growth-coupled production}. 
Let {\bf GR} and {\bf PR} represent the rates of cell growth and target metabolite reactions, respectively.
In this study, growth-coupled production is defined as a condition where the minimum values of PR and GR are 0.001 mmol/gDW/h or more when GR is maximized.
Metabolic network design is often identified by reaction deletion strategies from the original constraint-based models. 

One of the most efficient methods to determine the reaction deletion strategies for growth-coupled production is the elementary mode (EM) and/or minimum cut set (MCS)-based methods, which utilize non-decomposable steady-state fluxes in the design of constraint-based models \cite{ballerstein2011minimal,schuster1994elementary}. 
The MCS-based method developed by von Kamp et al. demonstrated that growth-coupled production was feasible for almost all metabolites under appropriate conditions in genome-scale metabolic models of five key industrial species \cite{von2017growth}.

However, reaction deletions are realized by gene deletions because many chemical reactions are catalyzed by enzymes encoded by genes in metabolic networks. 
Identifying gene deletion strategies for growth-coupled production remains challenging due to intensive computation, especially with complex {\bf gene-protein-reaction (GPR) rules} and many necessary gene deletions in genome-scale models \cite{machado2016stoichiometric,razaghi2020genereg,tamura2022trimming}.
In the constraint-based models, GPR rules represent the relationships between genes and reactions with Boolean functions, where the inputs are genes and the outputs are reactions.
When a reaction is inhibited due to gene deletions via GPR rules, both the lower and upper bounds of the reaction rate are forced to be zero.

Recently, Tamura et al. developed gDel\_minRN by considering GPR rules to identify gene deletion strategies to extract the core part for growth-coupled production \cite{tamura2023gene}.
However, the success ratio of gDel\_minRN still has room for improvement.
Introducing genes from other species is another powerful approach to improve the success ratio.
If we can combine multiple constraint-based models and obtain appropriately a larger reference network that contains all potential additions, the modification strategies consisting of gene additions and deletions can be identified through gene deletions from the larger network.

In this study, we developed RatGene, which (1) integrates two constraint-based models $N_1$ and $N_2$, (2) identifies modification strategies, consisting of gene additions and deletions from $N_1$, for growth-coupled production, and (3) reduces the size of the modification strategies.
In the computational experiments, the performance of RatGene on various datasets was compared with that of existing methods, GDLS and gDel$\_$minRN \cite{lun2009large,tamura2023gene}.
The results of the computational experiments showed that (1) for many target metabolites, RatGene can calculate a gene deletion-addition strategy even when gDel\_minRN cannot, 
(2) by using gDel\_minRN and RatGene in a complementary manner, gene deletion-addition strategies that lead to growth-coupled production can be calculated for more target metabolites, and (3) RatGene is efficient for identifying gene deletion strategies (without gene addition) as well. 

The remaining of this paper is as follows:
Section~\ref{bg} briefly summarizes the constraint-based models;
Section~\ref{subsec:prodef} defines problems and proves NP-hard and NP-complete; 
Section 2.3 explains the workflow of the developed algorithm RatGene;
Section~\ref{subsec:integrate} describes the integration process of two models;
Section~\ref{subsec:strgen} introduces how RatGene determines the initial modification strategies;
Section~\ref{subsec:sizred} states the process to minimize the size of the modification strategies.
Then, computational experiments conducted in this study are reported in Section~\ref{sec:comexp}:
Sections 3.1 and 3.2 are for gene deletion-addition and deletion problems, respectively.
Sections \ref{subsec:discuss}, \ref{subsec:conclu}, and \ref{subsec:review} are for discussion, conclusion, and related works, respectively.

\section{Method}\label{sec:method}
\subsection{Constraint-based Models of Metabolic Networks}\label{bg}
A constraint-based model $N = \{ R, M, G, S, lb, ub, h \}$ of metabolic networks is composed of a set of \textbf{reactions} $R$, a set of \textbf{metabolites} $M$, a set of \textbf{genes} $G = \{g_1,\ldots,g_l\}$, a \textbf{stoichiometric matrix} $S$, a set of \textbf{lower bounds} of reaction rates $lb$, a set of \textbf{upper bounds} of reaction rates $ub$, and a set of \textbf{GPR} rules $h$. 
In the context of constraint-based models of metabolic networks, the important thermodynamic properties of the system are encoded within the stoichiometric matrix $S$. 
This matrix is of dimensions $m \times n$, where $m$ and $n$ represent the number of metabolites and reactions in the network, respectively. 
A \textbf{mode} is a flux vector $v \in \mathbb{R}^n$ in the null space of $S$, $N(S)=\{v|S \cdot v=0\}$. This implies that the consumption and production rates of any metabolite in the network are equal in order to maintain the stability of its concentrations:
\begin{equation}\label{cs}
    S \cdot v=0
\end{equation}
For accurate representation of the reaction rates of $n$ reactions, the following inequalities must hold true:
\begin{equation}
    lb_i \leq v_i \leq ub_i \quad i \in \{1, 2, \ldots, n\}
\end{equation}
This imposes limitations on the lower bound and the upper bound of the reaction rates, respectively, as the magnitude of the reaction rate in the system cannot be infinitely high or low. 
Furthermore, GPR rules given by Boolean functions are represented as follows:
\begin{equation}\label{e:6}
    p_i=h_i(G)
\end{equation}
Each element of $G$ is a binary variable, taking on a value of either 0 or 1.
$h$ is a Boolean function that typically has two forms:
\begin{align}
    & h_i = \bigwedge_{j=1}^{\lambda_j} \gamma_j, \quad \gamma_j \in \{ g, h \}  \\
    & h_i = \bigvee_{j=1}^{\lambda_j} \gamma_j, \quad \gamma_j \in \{ g, h \}
\end{align}
Here, $\gamma$ is either a gene $g$ or another Boolean function $h$, and $\lambda$ is the number of entities in $h$. 
These two forms can be represented by the following two linear inequalities:
\begin{align}
    & 1 - \lambda + \sum \gamma \leq \lambda \cdot h_i \leq \sum \gamma  \\
    & \sum \gamma \leq \lambda \cdot h_i \leq \lambda \cdot \sum \gamma 
\end{align}
The inequalities $(2)$ then can be modified as:
\begin{equation}\label{ce}
    p_i \cdot lb_i \leq v_i \leq p_i \cdot ub_i,
\end{equation}
where $p_i$ indicates whether the reaction $v_i$ is deleted.
In FBA, the vector space \( V \) is defined by inequalities \( (\ref{cs}) \) to \( (\ref{ce}) \). This vector space represents the feasible solution space. Within this space, the objective is to find an optimal solution that satisfies the objective function of the mixed integer linear programming (MILP) problem \( P \):

\begin{align}
    &max \quad f(v) \\
    &s.t. \nonumber \\
    &\quad S \cdot v=0 \nonumber \\
    &\quad p_i \cdot lb_i \leq v_i \leq p_i \cdot ub_i \nonumber \\
    &\quad p_i=h_i(g), \nonumber
    %&\quad v_{biomass} \geq lb^{min}_{biomass} \nonumber \\
    %&\quad v_{oxygenUptake} \leq ub^{max}_{oxygenUptake} \nonumber \\
    %&\quad v_{carbonUptake} \leq ub^{max}_{carbonUptake} \nonumber 
\end{align}
where $f(v)$ is the objective function.
The problem for maximizing $f(v)$ on the vector space $V$ is denoted by $P_V^{f(v)}$.
When $f(v)$ is GR, the problem is denoted as $P_V^{v_{growth}}$, where the cell growth reaction rate $v_{growth}$ is maximized.
The solution $x_V^{f(v)}$ represents a vector that indicates the rates of $n$ reactions when maximizing $f(v)$ in $V$. 
We need to identify a subspace $U \subseteq V$ such that the solution $x_U^{f(v)}$ of the problem $P_U^{f(v)}$ satisfies GR$\geq$0.001 and PR$\geq$0.001. 
The subspace $U$ represents a space induced by a modification strategy, specifically a deletion strategy in this case.
%Growth-coupled production is a state in which $x_{target} > 0$ is satisfied in $x^{v_{growth}}_{U}$ is achieved. 
The modification strategy $D$ is a 0/1 assignment for each gene, which represents either deletion or retention of the associated gene. 
Any space derived from a specific \( D \) is a subspace of the original vector space \( V \).

For example, consider a constraint-based model where \( l = 3 \), \( n = 2 \), and the GPR rules are given by \( p_1 = g_1 \land g_2 \) and \( p_2 = g_1 \lor g_3 \).
Let us consider a modification strategy \( D_e = (1,0,0) \), which implies \( g_1 = 1 \), \( g_2 = 0 \), and \( g_3 = 0 \).
The following illustrate the original feasible solution space \( V_e \) and the subspace \( U_e \) derived from the modification strategy \( D_e \), where genes \( g_2 \) and \( g_3 \) are deleted.

\begin{minipage}{0.45\linewidth}
\begin{align*}
    &\mathbf{V_e} \\
    &S \cdot v=0 \\
    &p_1 \cdot lb_1 \leq v_1 \leq p_1 \cdot ub_1 \\
    &p_2 \cdot lb_2 \leq v_2 \leq p_2 \cdot ub_2 \\
    &p_1=g_1 \wedge g_2 \\
    &p_2=g_1 \vee g_3 \\
\end{align*}
\end{minipage}
\hfill
\begin{minipage}{0.45\linewidth}
\begin{align*}
    &\mathbf{U_e} \\
    &S \cdot v=0 \\
    &v_1=0 \\
    &lb_2 \leq v_2 \leq ub_2 \\
    &p_1=0 \\
    &p_2=1 \\
\end{align*}
\end{minipage}

Figure \ref{fig:net1}(A) provides a toy example network to describe the assumption of the flux balance and the growth-coupled production. This simple example network contains seven reactions \{R1, R2, R3, R4, R5, R6, R7\} and three metabolites \{C1, C2, C3\}, where R1 is the input to the network and R6 and R7 are the outputs of the network. R6 and R7 are the cell growth and the target metabolite production reactions, respectively. The upper and lower bounds of all reactions are attached as $[\epsilon_x,\epsilon_y]$ in the figure. 
Figure \ref{fig:net1}(B) illustrates the stoichiometric matrix $S$ corresponding to this example network.
Each row represents the relationship between a metabolite and all reactions, and each entry indicates the coefficient by which that metabolite is produced or consumed per occurrence of that reaction.
For example, the first row indicates that C1 is generated by R1 but consumed by R2, R3, and R4. 
The assumption of flux balance means that the concentration of any metabolite is stable. 
For instance, the inner product of the first row of $S$ and a flux vector $[2,1,1,0,0,0,0]^{T}$ ensures the change of the concentration of C1 is 0. 
One of the valid flux vectors for the wild type is given by \( [2,2,0,0,0,2,0]^{T} \). Since the flux of R2 can be substituted with R3 and R5, for example, the vector \( [2,0,2,0,2,2,0]^{T} \) is also valid.

Figure \ref{fig:net1}(C) is a table representing deletion strategies and the resulting flux rates.
The deletion strategy \{R3,R5\} induces the flux vector $[2,0,0,2,0,2,2]^{T}$ in the optimistic scenario (best-case) regarding R7.
This is the state of growth-coupled production since
the target metabolite production reaction rate is not zero when the growth reaction is maximized under the assumption of flux balance. 
However, this deletion strategy would not ensure a constant growth-coupled production because the flux vector $[2,2,0,0,0,2,0]^{T}$ is obtained in the pessimistic scenario (worst-case) regarding R7.  
To ensure growth-coupled production even in the worst-case scenario for the target metabolite production, we should delete \{R2, R5\}.
And this deletion strategy induces the flux vector $[2,0,0,2,0,2,2]^{T}$, which ensures the growth-coupled production even in the worst case.
\begin{figure}[ht]
    \centering
    \includegraphics[scale=0.5]{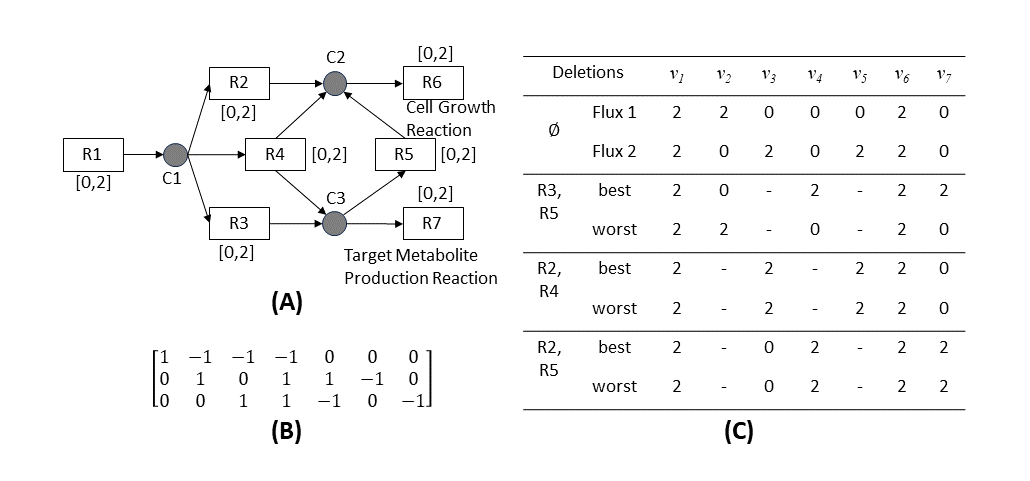}
    \captionsetup{justification=centering}
    \caption{(A) An example network. (B) The stoichiometric matrix. (C) A table represents deletion strategies and flux rates.}
    \label{fig:net1}
\end{figure}

\subsection{Problem Definition}\label{subsec:prodef}

We consider four types of problems for finding \( x_U^{f(v)} \) in \( P_U^{f(v)} \). 
The gene deletion problem for maximizing the minimum $v_{target}$ is defined as follows, where $v_{target}$ is the target metabolite production reaction.

\begin{problem}\label{p1}
Find $D$ that derives the subspace $U$ of the original vector space $V$, subject to $U=\underset{U \subseteq V} {\operatorname{argmax}} \ (\operatorname{min}v_{target})$ in $P^{v_{growth}}_{V}$.
\end{problem}
Instead of maximizing the minimum PR with GR maximization, its decision problem version is defined with the thresholds $lb^{threshold}_{target}$ and $lb^{threshold}_{growth}$ for $v_{target}$ and $v_{growth}$.

\begin{problem}\label{p2}
({\bf Prob-gDel}): 
Find $D$ that derives the subspace $U$ of the original vector space $V$, subject to $v_{target} \geq lb^{threshold}_{target}$ and $v_{growth} \geq lb^{threshold}_{growth}$ for any $\ v_{target}$ in $P^{v_{growth}}_{V}$. "No solution" is returned if there is no such $D$.
\end{problem}
When $lb^{threshold}_{target}$ gradually increases, Problem \ref{p2} approaches to Problem \ref{p1}. 

\begin{theorem}\label{th1}
Problem \ref{p1} is NP-hard.
\end{theorem}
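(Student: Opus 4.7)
The plan is to establish NP-hardness by a polynomial-time reduction from 3-SAT to the decision version of Problem~\ref{p1}, namely the question ``does there exist a deletion strategy $D$ so that, among the growth-optimal fluxes of $P^{v_{growth}}_U$, the minimum of $v_{target}$ is at least some positive constant $\tau$?''. Since this decision problem can be answered by comparing the optimum of Problem~\ref{p1} to $\tau$, NP-hardness of the decision version immediately yields NP-hardness of Problem~\ref{p1}. The reduction exploits the fact that the GPR rules in constraint-based models are exactly monotone Boolean formulas over genes, so a Boolean assignment on variables can be encoded directly as the 0/1 vector $D$.

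Given a 3-SAT instance $\phi$ with variables $x_1,\ldots,x_n$ and clauses $C_1,\ldots,C_m$, I would construct a constraint-based model $N$ in polynomial time as follows. For each variable $x_i$ introduce two genes $g_i^{+},g_i^{-}$ encoding its truth value. For each clause $C_j=\ell_{j,1}\vee\ell_{j,2}\vee\ell_{j,3}$ introduce a reaction $r_j$ whose GPR is the disjunction of the corresponding gene indicators, where a positive literal $x_i$ is encoded by $g_i^{+}$ and a negative literal $\neg x_i$ by $g_i^{-}$. Chain the $r_j$ through auxiliary metabolites on a single linear path that ends at $v_{target}$, so that by $S\cdot v=0$ a positive value of $v_{target}$ forces all $r_j$ to carry positive flux. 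For each variable add a \emph{consistency gadget}: a reaction whose GPR is $g_i^{+}\wedge g_i^{-}$ with tight bounds that, when both genes are retained, drains an auxiliary metabolite required on the target pathway, thereby forcing $v_{target}=0$. Finally, let $v_{growth}$ be a reaction driven by an independent source/sink pair that is untouched by the encoding, so that the inner maximization of $v_{growth}$ has the same optimum for every $D$ and the outer ``worst-case'' criterion for $v_{target}$ collapses to a clean feasibility question.

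Correctness then follows by routine checking in both directions: a satisfying assignment of $\phi$ gives the deletion strategy ``retain $g_i^{+}$ if $x_i$ is true, retain $g_i^{-}$ otherwise,'' and the serial clause chain admits a flux vector with $v_{target}\geq\tau$; conversely, any $D$ achieving $v_{target}\geq\tau$ must, by the consistency gadget, delete at least one of $g_i^{+},g_i^{-}$ for every $i$, and must keep in each clause at least one gene matching a literal so that the corresponding $r_j$ can carry the required flux, which translates into a satisfying Boolean assignment. The number of added genes, reactions and metabolites is $O(n+m)$, so the reduction is polynomial in $|\phi|$.

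The main obstacle will be making the bi-level structure of Problem~\ref{p1} behave, since the outer objective is the worst-case $v_{target}$ over the entire growth-optimal face of $U$ rather than a single flux: an unsatisfying $D$ must not accidentally admit a growth-optimal flux with large $v_{target}$ via some flux-balance workaround. I would control this by insisting that (i) $v_{growth}$ is structurally decoupled from the encoding reactions, so that the growth-optimal face of $U$ contains every flux feasible in the encoding subnetwork, and (ii) the clause reactions lie on the \emph{unique} linear pathway carrying flux into $v_{target}$, so that the only way to realize $v_{target}>0$ is through literal-by-literal satisfaction. If these structural rigidities prove delicate, a backup is to reduce instead from an already-established NP-hard gene-deletion problem on constraint-based models, where this bi-level subtlety has been handled in prior work.
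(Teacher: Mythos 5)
Your overall route is the same as the paper's: a polynomial-time reduction from 3-SAT with a gadget per variable and a gadget per clause. The differences are cosmetic in places (the paper encodes each variable with a single gene and a routing gadget, Figure \ref{fig:proof}(A), that automatically sends flux to exactly one of $x_j$, $\bar{x_j}$, so no separate consistency gadget is needed; the paper's clause gadget uses three alternative reactions consuming $1$, $2$, or $3$ units of a shared metabolite $u$ rather than a GPR disjunction), but one of your deliberate design choices introduces a genuine gap.

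The gap is in how you handle the inner min. Problem \ref{p1} asks for the \emph{minimum} of $v_{target}$ over the entire growth-optimal face of $U$, and your property (i) --- decoupling $v_{growth}$ from the encoding so that ``the growth-optimal face of $U$ contains every flux feasible in the encoding subnetwork'' --- works directly against you. The all-zero flux on the encoding subnetwork satisfies $S\cdot v=0$ and the box constraints (nothing you describe forces the clause chain to carry flux), so for a satisfying $D$ the growth-optimal face still contains a flux with $v_{target}=0$, hence $\min v_{target}=0$ for \emph{every} $D$ and the reduction distinguishes nothing. Your correctness argument only establishes that a satisfying assignment ``admits'' a flux with $v_{target}\geq\tau$, which is an existential claim, not the universal one the objective requires. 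The fix is to force flux through the target pathway: either impose a strictly positive lower bound on $v_{target}$ (so that unsatisfying $D$ makes $U$ infeasible and satisfying $D$ makes every feasible flux carry $v_{target}\geq\tau$), or do what the paper does and couple PR to the maximized GR --- in Figure \ref{fig:proof}(C) the growth rate $1$ is attainable only when every clause produces $y=1$, and $GC=GR=PR=1$ is then forced on the whole optimal face. A secondary, smaller issue: your consistency gadget must not merely be \emph{permitted} to drain the auxiliary metabolite when both $g_i^{+}$ and $g_i^{-}$ are retained, it must be \emph{forced} to (again a strictly positive lower bound), and the drained metabolite must have no alternative source; otherwise an inconsistent $D$ could still realize $v_{target}>0$ on some flux, which matters for the converse direction.
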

\begin{proof}
The NP-hardness is proved by the reduction from the 3-SAT problem \cite{sipser1996introduction}. 
Any instance of the 3-SAT problem can be represented by the constraint-based models as shown in Figure \ref{fig:proof}. 
The rectangles and circles with solid borders are reactions and metabolites, respectively. 
All coefficients related to reactions are set to 1 unless otherwise specified.
Figure \ref{fig:proof}(A) shows the representation of a literal. 
$g_i$ is a gene that controls the reaction $x_k$: $g_i=1$ makes $x_k$ active while $g_i=0$ makes $x_k$ inactive.
Suppose that $g_i=1$ holds.
Since $\bar{x_i}=x_k=1$ always holds,
all flow from $\bar{x_i}$ reaches $x_k$, but not to $\bar{x_j}$.
All flow from $x_i$ reaches $x_j$, but not to $\bar{x_j}$.
Then, $x_j=1$ and $\bar{x_j}=0$ hold.
Next, Suppose that $g_i=0$ holds.
Because $x_k=0$ and $\bar{x_i}=1$ must hold,
all flow from $\bar{x_i}$ must be consumed by $\bar{x_j}$.
The flow from $x_i$ is consumed by $\bar{x_j}$ and cannot reach $x_j$.
Then, $x_j=0$ and $\bar{x_j}=1$ hold.
Thus, The network of Figure \ref{fig:proof}(A) ensures that $l_i$ is produced from $x_j$ when $g_i=1$ and from $\bar{x_j}$ when $g_i=0$. 

Figure \ref{fig:proof}(B) shows the representation of a clause of the 3-SAT instance. The rectangles with dashed borders are networks of literals shown in (A). 
The metabolite $u$ is produced by $x_j$ for a positive literal of the original 3-SAT while by $\Bar{x_j}$ for a negative literal in Figure \ref{fig:proof}(A).
$C_m$ represents a reaction $u \rightarrow y$, $C_p$ represents a reaction $2\cdot u \rightarrow y$, and $C_n$ represents a reaction $3\cdot u \rightarrow y$.
This ensures the existence of the case that one $y$ is produced when at least one literal is true.
Specifically, $C_m$, $C_p$, or $C_n$ is used depending on whether the number of satisfied literals is 1, 2, or 3, respectively.

Using the sub-networks illustrated in  Figure \ref{fig:proof}(A) and (B), any 3-SAT problem can be represented by the constraint-based model of  Figure \ref{fig:proof}(C). Rectangles with dashed borders are clauses formed by (B). $PR$ and $GR$ are the target metabolite production and the cell growth reactions, respectively.
Growth-coupled production, defined by $x_{growth} \geq 0.001$ and $x_{target} \geq 0.001$ hold for any $x_{target} \in x_{U}^{v_{growth}}$, is achieved only when $GC=GR=PR=1$ holds.
Since $l_i$, $l_j$, and $l_k$ are always 1, growth-coupled production is achieved if and only if every $y$ is 1.
Therefore, finding a solution for the 3-SAT problem is equal to finding a gene deletion strategy in the constraint-based model.
Because this constraint-based model can be obtained from any 3-SAT problem in polynomial time, the original problem is NP-hard.
\end{proof}

\begin{figure}
    \centering
    \includegraphics[scale=1]{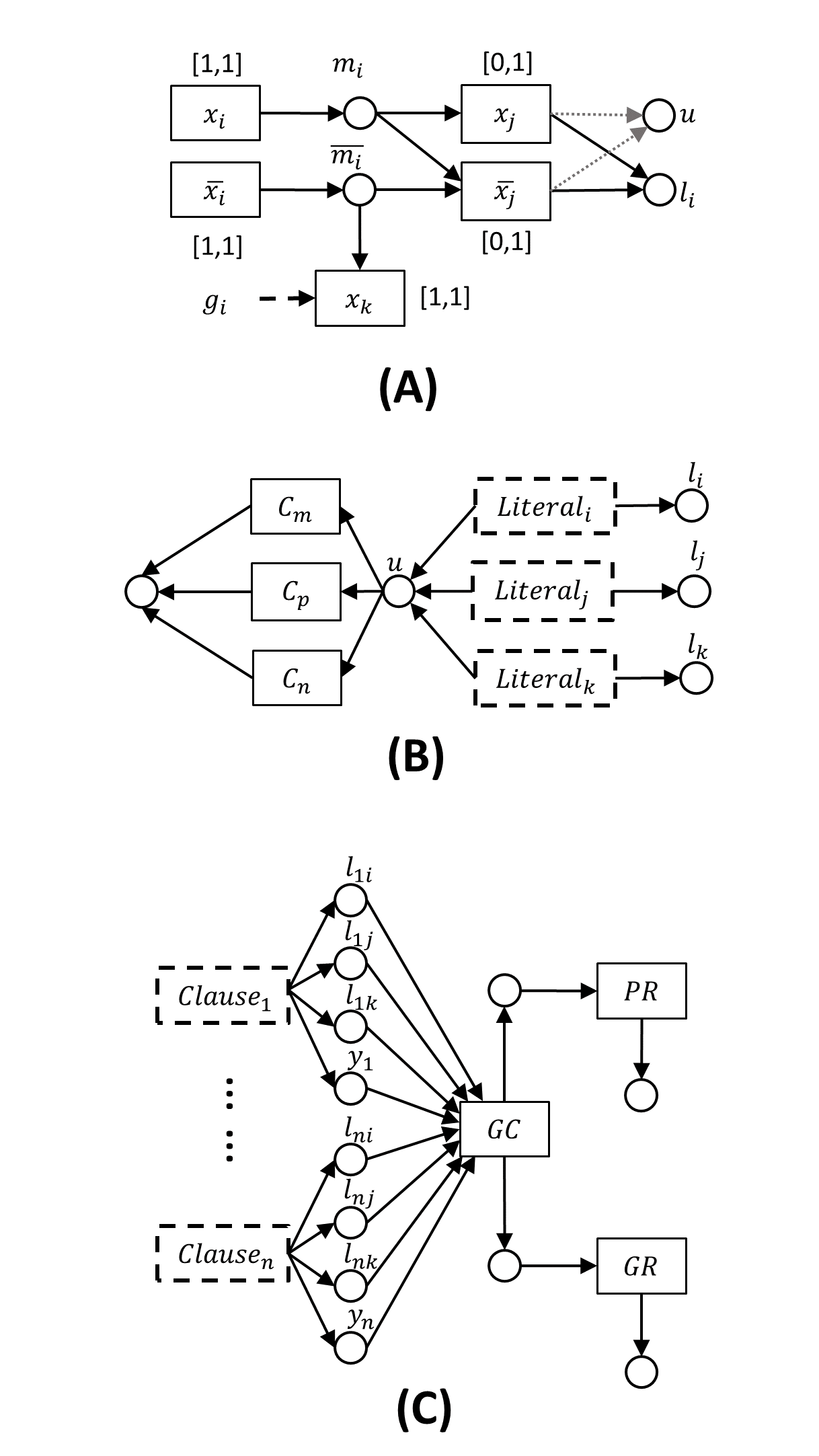}
    \captionsetup{justification=centering}
    \caption{How to convert a 3-SAT problem into the gene deletion problem in a constraint-based model. (A) A network representing a literal. (B) A network representing a clause. (C) A network representing a 3-SAT problem.}
    \label{fig:proof}
\end{figure}

The only distinction between Problems \ref{p1} and \ref{p2} is that the former is an optimization-based problem while the latter is a decision-based problem. Consequently, the following lemma can be derived.

\begin{lemma}\label{th2}
Problem \ref{p2} is NP-complete.
\end{lemma}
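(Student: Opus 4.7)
The plan is to establish Lemma 1 by combining the NP-hardness already obtained for the optimization version with a short polynomial-time verifier, thereby placing Prob-gDel simultaneously in NP-hard and in NP.

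For NP-hardness, I would reuse the reduction from Theorem 1 essentially verbatim. The construction depicted in Figure \ref{fig:proof} embeds a 3-SAT instance into a constraint-based model in which growth-coupled production is achieved exactly when the auxiliary reactions $y$ of every clause are forced to value $1$, which occurs if and only if the 3-SAT instance is satisfiable. Setting $lb^{threshold}_{target} = lb^{threshold}_{growth} = 0.001$ converts this directly into an instance of Problem \ref{p2} whose yes-instances coincide with satisfiable 3-SAT formulas, so the same polynomial-time construction witnesses NP-hardness of Prob-gDel.

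For NP membership, I would use the modification strategy $D \in \{0,1\}^{|G|}$ itself as the certificate; it has size polynomial in the input. Once $D$ is fixed, every Boolean function $h_i(G)$ is evaluated, each $p_i$ becomes a constant, and the constraints of $P_U^{v_{growth}}$ reduce to a pure linear program in $v$. The verifier can then: (i) solve the LP that maximizes $v_{growth}$ under $S v = 0$ and $p_i \cdot lb_i \le v_i \le p_i \cdot ub_i$, obtaining a value $v^{*}_{growth}$; (ii) solve a second LP that minimizes $v_{target}$ subject to the same constraints together with $v_{growth} = v^{*}_{growth}$; and (iii) accept iff $v^{*}_{growth} \ge lb^{threshold}_{growth}$ and the minimum of $v_{target}$ in step (ii) is at least $lb^{threshold}_{target}$. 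Since linear programs are solvable in polynomial time, the verifier runs in polynomial time.

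The main conceptual obstacle is that the "for any $v_{target}$" quantifier inside Problem \ref{p2} initially looks like a $\forall \exists$ alternation which might push the problem beyond NP. The key observation is that after $D$ is fixed, the feasible set is a polyhedron and $v_{target}$ is a linear objective, so the worst-case value under maximized growth is computed by a single inner LP and no additional guessing beyond the choice of $D$ is required. Granted this, membership in NP follows from the verifier, and combined with the reduction above the lemma is established.
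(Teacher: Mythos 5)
Your proposal is correct and follows essentially the same route as the paper: NP-hardness is inherited from the Theorem~\ref{th1} reduction with fixed thresholds, and membership in NP is witnessed by the strategy $D$ together with polynomial-time verification. Your verifier is actually spelled out more carefully than the paper's one-line assertion --- in particular, your observation that the ``for any $v_{target}$'' quantifier collapses to a single inner LP once $D$ is fixed is a worthwhile clarification that the paper leaves implicit.
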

\begin{proof}
Problem \ref{p2} is NP-hard even when ${lb^{threshold}_{growth}}$ and ${lb^{threshold}_{target}}$ are limited to 1 according to the proof in Theorem \ref{th1}.
Because any solution can be verified in polynomial time, Problem \ref{p2} is in NP.
Since Problem \ref{p2} is in NP and NP-hard, the NP-completeness is proved.
\end{proof}

\noindent
Problems \ref{p1} and \ref{p2} focus on deletion strategies. However, in the following, we extend the problem definitions to simultaneously consider both deletion-addition strategies.

We consider a larger constrain-based model
$N_I = \{ R_I, M_I, G_I, S_I, lb_I, ub_I, h_I \}$ in addition to the original
$N = \{ R, M, G, S, lb, ub, h \}$, where $R \subseteq R_I$, $M \subseteq M_I$, and $G \subseteq G_I$ hold.
Let $V_I$ and $V$ be the vector spaces for $N_I$ and $N$, respectively.
In the extended problem,
a deletion strategy $D_I$ induces a subspace $U_I$ of $V_I$.
$U_I$ consists of a subspace $U$ of $V$ and an additional space $U^{'}$, which is a subspace of $N_I - N$.
This additional space $U^{'}$ consists of additional $n^{'}$ reactions, $m^{'}$ metabolites, and $l^{'}$ genes.
The union $U \cup U^{'}$ needs to simultaneously satisfy the constraints $(\ref{cs})$ to $(\ref{ce})$.

Then, Problem \ref{p1} is converted to as follows.
\begin{problem}\label{p3}
Find $D_I$ that derives the subspace $U_I = U \cup U^{'}$ of $V_I$, subject to $U_I=\underset{U \subseteq V, U^{'} \nsubseteq V} {\operatorname{argmax}} \ (\operatorname{min}v_{target})$ in $P^{v_{growth}}_{V_I}$.
\end{problem}
Similarly, we can define the decision version of Problem \ref{p3}.
\begin{problem}\label{p4}
({\bf Prob-gDel-Add}): 
Find $D_I$ that derives the subspace $U_I = U \cup U^{'}$ of $V_I$, subject to $v_{growth} \geq lb^{threshold}_{growth}$ and $v_{target} \geq lb^{threshold}_{target}$ for any
$v_{target}$ in $P^{v_{growth}}_{V_I}$. "No solution" is returned if there is no such $D_I$.
\end{problem}
The following lemma is held for Problems \ref{p3} and \ref{p4}.
\begin{lemma}\label{th3}
Problem \ref{p3} is NP-hard, and Problem \ref{p4} is NP-complete.
\end{lemma}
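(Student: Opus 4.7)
The plan is to handle both assertions by leveraging Theorem~\ref{th1} and Lemma~\ref{th2} directly, since Problems \ref{p1} and \ref{p2} are essentially special cases of Problems \ref{p3} and \ref{p4} in which no additions are permitted. Concretely, for the NP-hardness of Problem~\ref{p3}, I would reduce from Problem~\ref{p1}: given any instance $N = \{R, M, G, S, lb, ub, h\}$, construct an integrated instance by taking $N_I$ to contain $N$ together with a trivial dummy extension consisting of an isolated reaction/metabolite/gene component disconnected from the stoichiometry of $N$ and from $v_{growth}$ and $v_{target}$. Any optimal $D_I$ for Problem~\ref{p3} on this instance yields an optimal deletion strategy $D$ for Problem~\ref{p1} on $N$, because the adjoined component neither contributes to $v_{growth}$ nor to $v_{target}$ and does not couple into $S$. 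The dummy component can be made nonempty to accommodate the $U' \nsubseteq V$ condition, so this is a valid Karp reduction that runs in polynomial time. An entirely analogous reduction from Problem~\ref{p2} (Prob-gDel) to Problem~\ref{p4} (Prob-gDel-Add) gives the NP-hardness of Problem~\ref{p4}.

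For the membership of Problem~\ref{p4} in NP, the certificate is the 0/1 assignment $D_I$ over $G_I$, which has polynomial size. To verify in polynomial time, substitute the gene values into the GPR rules $h_I$ to determine each $p_i$, which turns inequality $(\ref{ce})$ into concrete linear bounds. One then solves a pair of linear programs over the resulting feasible space: first maximize $v_{growth}$ to obtain the optimal growth value $v_{growth}^{*}$, then, under the additional constraint $v_{growth} \geq v_{growth}^{*}$, minimize $v_{target}$. The strategy $D_I$ is accepted exactly when $v_{growth}^{*} \geq lb^{threshold}_{growth}$ and the minimum $v_{target}$ is at least $lb^{threshold}_{target}$. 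Both LPs are polynomial-time solvable, so verification is polynomial and Problem~\ref{p4} lies in NP; combined with NP-hardness, this establishes NP-completeness.

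The main obstacle I anticipate is a technical one rather than a conceptual one, namely reconciling the literal wording ``$U^{'} \nsubseteq V$'' in the statement of Problem~\ref{p3} with the desired reduction that effectively forbids additions. I would address this by ensuring the dummy extension placed in $N_I \setminus N$ is genuinely nonempty yet inert: for instance, an isolated gene-controlled reaction whose associated metabolite does not appear in any other reaction, so that its flux is forced to zero and it cannot affect either the growth optimum or the worst-case target flux. This guarantees the extension is formally outside $V$ while being operationally irrelevant, so the optimum of Problem~\ref{p3} on $(N, N_I)$ coincides with that of Problem~\ref{p1} on $N$. Everything else—the NP-hardness argument for Problem~\ref{p3}, the parallel argument for Problem~\ref{p4}, and the LP-based verification—then follows in a routine manner.
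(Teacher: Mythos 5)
Your proposal is correct and follows essentially the same route as the paper's proof: specialize Problems~\ref{p3} and~\ref{p4} to the case where the added part of $N_I$ is irrelevant (the paper simply sets $N_I=N$), inherit NP-hardness from Theorem~\ref{th1} and Lemma~\ref{th2}, and observe that a candidate $D_I$ can be verified in polynomial time. Your version is somewhat more careful than the paper's one-line argument—handling the ``$U' \nsubseteq V$'' wording via an inert nonempty extension and spelling out the two-LP worst-case verification—but these are refinements of the same idea rather than a different proof.
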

\begin{proof}
When $N_I=N$, Problems \ref{p3} and \ref{p4} are equivalent to Problems \ref{p1} and \ref{p2}, respectively. Therefore, Problems \ref{p3} and \ref{p4} are NP-hard. 
Since we can validate $D_I$ in polynomial time for Problem \ref{p4}, it is in NP.
Since Problem \ref{p4} is NP-hard and in NP, it is NP-complete.
\end{proof}

\subsection{Workflow of RatGene}\label{subsec:parent}
Algorithm \ref{rat} describes the workflow of RatGene, which
integrates two constraint-based models $N_c$ and $N_e$ by Algorithm \ref{mi},
determines a modification strategy for growth-coupled production by Algorithm \ref{a1},
and reduces the strategy size by Algorithm \ref{a4}.
Algorithm \ref{a4} calls Functions \ref{f1} and \ref{f2} for the reduction of gene deletions and additions, respectively. 
Algorithms \ref{mi} to \ref{a4} and Functions \ref{f1} and \ref{f2} are described in the following subsections.

\addtocounter{algorithm}{-1}
\begin{algorithm}
\caption{RatGene}\label{rat}
\begin{algorithmic}[1]
    \renewcommand{\algorithmicrequire}{\textbf{Input:}}
    \renewcommand{\algorithmicensure}{\textbf{Output:}}
    \Require constraint-based model $N_c$ and $N_e$, target metabolite
    \Ensure A modification strategy $D_{min}$ for growth-coupled production
    \If{$N_e \notin \varnothing$}
        \State model integration of $N_c$ and $N_e$ to obtain $N_i$ by Algorithm \ref{mi}
    \EndIf
    \State obtain the modification strategy $D$ by Algorithm \ref{a1}
    \State obtain $D_{min}$ by Algorithm \ref{a4}
\end{algorithmic}
\end{algorithm}

\subsection{Model Integration}\label{subsec:integrate}
Problems \ref{p2} (Prob-gDel) and \ref{p4} (Prob-gDel-Add) described in Section \ref{subsec:prodef} are the main problems in this study. Prob-gDel-Add is the same as Prob-gDel after integrating two constraint-based models. 
For the integration of two constraint-based models, a {\bf core model} $N_c = \{ R_c, M_c, G_c, S_c, lb_c, ub_c, h_c \}$ is selected first, and then the other is defined as the {\bf edge model} $N_e = \{ R_e, M_e, G_e, S_e, lb_e, ub_e, h_e \}$. 
Reactions included in the edge model but not in the core model are defined as $R_{ec} = R_e \backslash \{R_e \cap R_c\}$; their lower and upper bounds and GPR rules are represented by $lb_{ec}$, $ub_{ec}$, and $h_{ec}$, respectively. 
Metabolites and genes associated with $R_{ec}$ are represented by $M_{ec}$ and $G_{ec}$, respectively.
Then, additional reactions $R_{ec}$, metabolites $M^{ec} = M_{ec} \backslash \{M_{ec} \cap M_c\}$ and genes $G^{ec} = G_{ec} \backslash \{G_{ec} \cap G_c\}$ are added to the core model to form the integrated model: $N_I = \{ R_I = R_{ec} \cup R_c, M_I = M^{ec} \cup M_c, G_I = G^{ec} \cup G_c, lb_I = lb_{ec} \cup lb_c, ub_I = ub_{ec} \cup ub_c, h_I = h_{ec} \cup h_c \}$. 
The integrated stoichiometric matrix $S_I$ is constructed by a horizontal concatenation of two matrices
as shown in Figure \ref{fig:matrix}. 
The left-part matrix is formed by a vertical concatenation of a zero matrix with $|M^{ec}|$ rows and $|R_c|$ columns below the matrix $S_c$. 
And the right-part matrix is made by a vertical concatenation of a zero matrix and the matrix $S_{ec}$, which corresponds to metabolites $M_{ec}$ and reactions $R_{ec}$ with $|M_{ec}|$ rows and $|R_{ec}|$ columns. 
This results in the integrated model $N_I = \{ R_I, M_I, G_I, S_I, lb_I, ub_I, h_I \}$ retain the same format as the core model. 
Algorithm \ref{mi} is the pseudo-code for the model integration.

\begin{figure}[ht]
    \centering
    \includegraphics[scale=0.5]{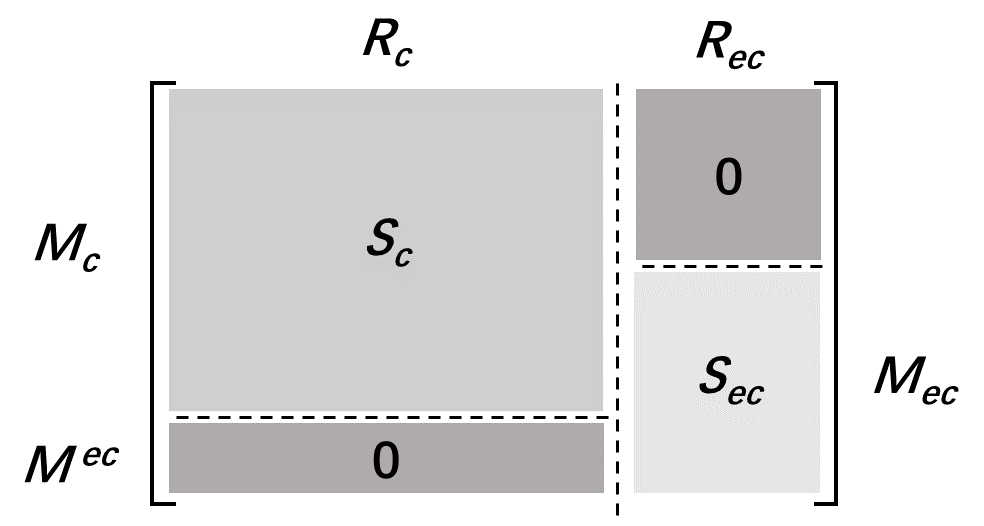}
    \captionsetup{justification=centering}
    \caption{The structure of the integrated stoichiometric matrix $S_I$.}
    \label{fig:matrix}
\end{figure}

\begin{algorithm}
\caption{Model Integration}\label{mi}
\begin{algorithmic}[1]
    \renewcommand{\algorithmicrequire}{\textbf{Input:}}
    \renewcommand{\algorithmicensure}{\textbf{Output:}}
    \Require core model $N_c$, edge model $N_e$
    \Ensure integrated model $N_I=\{ R_I, M_I, G_I, S_I, lb_I, ub_I, h_I \}$
    \State $R_{ec} \gets R_e \backslash \{R_e \cap R_c\}$
    \State obtain $M_{ec}$, $G_{ec}$, $h_{ec}$, $lb_{ec}$, and $ub_{ec}$ corresponding to $R_{ec}$
    \State $R_I = R_{ec} \cup R_c, lb_I = lb_c \cup lb_{ec}, ub_I = ub_c \cup ub_{ec}, h_I = h_c \cup h_{ec}$
    \State $M^{ec} \gets M_{ec} \backslash \{M_{ec} \cap M_c\}$, $G^{ec} \gets G_{ec} \backslash \{G_{ec} \cap G_c\}$
    \State $M_I \gets M^{ec} \cup M_c$, $G_I \gets G^{ec} \cup G_c$
    \State $S_{Il} \gets [S_c; zeros(|M^{ec}|,|R_c|)]$~~~//vertical concatenation
    \For {$i = 1$ to $|M_I|$}
        \If {$M_I(i) \in M_{ec}$}
            \State $S_{Ir}[i][:] \gets S_e[position(M_I(i) \ in \ M_e)][:]$
        \Else
            \State $S_{Ir}[i][:] \gets 0$
        \EndIf
    \EndFor
    \State $S_I \gets [S_{Il} | S_{Ir}]$~~~//horizontal concatenation
\end{algorithmic}
\end{algorithm}

For example, Figures \ref{fig:net2}(A) and \ref{fig:net2}(B) represent a core and an edge model, respectively. 
Figure \ref{fig:net2}(C) is the integrated model consisting of the core model and a part of the edge model. 
$R_c$ and $R_{ec}$ are \{R1,R2,R3,R6,R7,R8,R9,R10,R11\} and \{R4,R5\}, respectively. 
According to the definition, $M_c$, $M_{ec}$, and $M^{ec}$ are then \{C1,C2,C3,C5,C6,C7\}, \{C1,C3,C4,C6\}, and \{C4\}, respectively. Thus, $S_c$ is the stoichiometric matrix of the core model and  $S_{ec}$ is a part of the stoichiometric matrix of the edge model. And this $S_{ec}$ is a matrix with a size of four rows and two columns, which is specifically composed with the interaction of rows corresponding to \{C1,C3,C4,C6\} and columns corresponding to \{R4,R5\} from the original stoichiometric matrix $S_e$ in the edge model.

\subsection{Strategy Generation}\label{subsec:strgen}
In the next phase of RatGene, a ratio-based constraint is iteratively applied to systematically construct MILP problems. These problems can then be used to determine a modification strategy.
In each iteration of the loops, a different ratio-based MILP problem is formulated and solved. Each MILP problem ensures the flux balance assumption and sets the upper and lower bounds of the reaction rates.
Due to the need to simulate the functioning of biological systems, certain specific reactions have assigned thresholds for their lower and/or upper bounds.
Typically, thresholds are set to both the minimum rate for cell growth reactions and the maximum rates for oxygen and glucose uptake reactions.
These reactions are crucial for the survival and functioning of microorganisms:
\begin{align}
    &v_{biomass} \geq lb^{min}_{biomass} \\
    &v_{oxygenUptake} \leq ub^{max}_{oxygenUptake} \\
    &v_{glucoseUptake} \leq ub^{max}_{glucoseUptake}
\end{align}
The associations between GPR rules and their corresponding reactions can be represented by linear constraints \cite{tamura2023gene}. 

A predetermined value $\alpha$ restricts the ratio between the target reaction rate $v_{target}$ and the growth reaction rate $v_{biomass}$. 
However, the appropriate value of $\alpha$ is unknown and different for different networks and different target metabolite production reactions. 
Therefore, loops are designed to iteratively assign different constants to find an appropriate value for $\alpha$.

According to the ratio constraint $v_{target} \cdot {v_{biomass}}^{-1}=\alpha$, clearly, $\alpha \in [0 \cdot TMGR^{-1},$ $TMPR \cdot {lb^{min}_{biomass}}^{-1}]$ can be obtained because the values of $v_{target}$ and $v_{biomass}$ are both non-negative and the $lb$ for $v_{biomass}$ should be greater than zero by the problem definition, where $TMPR$ is the theoretical maximum production rate and $TMGR$ is the theoretical maximum growth rate. 
A constant value $TMPR \cdot {lb_{biomass}^{min}}^{-1} \cdot maxLoop^{-1} \cdot loop$ is added to the value of $\alpha$ in each iteration of the loops. 
Here, $maxLoop$ is the total number of iterations and $loop$ is the cumulative number of iterations. 
The objective function of each MILP problem is constructed as minimizing the sum of $l_0$-Norm of the reactions scaled by $TMGR$ and the negative value of the cell growth reaction rate.
Then, the MILP problem is formulated as (\ref{sg}). 
Figure \ref{fig:net2} shows an example of the importance of the extra ratio constraint and the reason why such an objective function is designed.
\begin{align}\label{sg}
    &min \quad -v_{biomass} + TMGR \cdot \|v_Q\|_0 \\
    &s.t. \nonumber \\
    &\quad S \cdot v=0 \nonumber \\
    &\quad p_i \cdot lb_i \leq v_i \leq p_i \cdot ub_i \nonumber \\
    &\quad p_i=h_i(g) \nonumber \\
    &\quad v_{biomass} \geq lb^{min}_{biomass} \nonumber \\
    &\quad v_{oxygenUptake} \leq ub^{max}_{oxygenUptake} \nonumber \\
    &\quad v_{glucoseUptake} \leq ub^{max}_{glucoseUptake} \nonumber \\
    &\quad \frac{v_{target}}{v_{biomass}}=\alpha \nonumber \\
    &\quad 0 \leq \alpha \leq \frac{TMPR}{lb^{min}_{biomass}} \nonumber \\
    &\quad Q=\{q \ | \ \exists h_q\} \nonumber
\end{align}
\\

For example, for a core and an edge model as shown in
Figures \ref{fig:net2}(A) and \ref{fig:net2}(B),
the integrated model is obtained as shown in Figure \ref{fig:net2}(C).
The edge model shares the same internal reaction R6 with the core model. 
The input is reaction R1 which plays a role of the nutrient uptake reaction, and the outputs are reactions R10 and R11, where R10 is the cell growth reaction and R11 is the target metabolite production reaction. 

In Figure \ref{fig:net2}(C), $\{R1, R2, R3, R10\}$, $\{R1, R7, R8, R11\}$, $\{R1, R6, R7, R8, R9\}$, $\{R1,$
$R4, R5, R10, R11\}$, and $\{R1, R4, R5, R6, R9, R10\}$ are the five elementary modes whose linear combinations satisfy the constraint \ref{cs}. 
Among these elementary modes, $\{R1, R4, R5, $
$R10, R11\}$ is included in neither the core model nor the edge model, so it is the newly generated elementary mode as a result of the integration of two models. 
To achieve the growth-coupled production, $\{R1, R4, R5, R10, R11\}$ can be the choice under the condition that the ratio between R10 and R11 is determined to be 1 and this could even satisfy the criterion for the worst-case analysis. 
However, a linear combination of $\{R1, R2, R3, R10\}$ and $\{R1, R7, R8, R11\}$ is also possible to make the ratio fixed at 1 in Figure \ref{fig:net2}(C). 
The number of reactions required by the linear combination of  $\{R1, R2, R3, R10\}$ and $\{R1, R8, R9, R11\}$ is seven which is greater than the number of reactions for $\{R1, R4, R5, R10, R11\}$. Therefore, minimizing $l_0$-Norm of the reactions and imposing $\alpha=1$ results in finding an effective reaction deletion strategy for this example.  
When multiple solutions exist in minimizing $l_0$-Norm,
GR is maximized.
To this end, $TMGR$ is multiplied by the $l_0$-Norm to reflect the priority in the objective function.
The feasible solution is deleting reactions $\{R2, R3, R6, R7, R8, R9\}$ in the integrated model, Figure \ref{fig:net2} (C), that is, deleting $\{R2, R3, R6, R7, R8, R9\}$ and adding $\{R4, R5\}$ to the core model, Figure \ref{fig:net2} (A). 

To obtain the modification strategies at the gene level, only reactions with GPR rules are considered for the $l_0$-Norms. 
The modification strategy can be represented by deleting $\{g1, g2, g3, g4, g6, g7\}$ in the integrated model, that is, deleting $\{g1, g2, g3, g4, g6, g7\}$ and adding $\{g8, g9\}$ in the core model. 
It is to be noted that the modification strategy at the gene level does not always exist for a designated reaction deletion strategy.
In RatGene, the modification strategy at the gene level is directly determined by MILP.

The obtained modification strategy consists of six gene deletions and two gene additions for the core model. 
The reduced modification strategy by the process described in Section \ref{subsec:sizred} consists of five gene deletions $\{g1,g2,g4,g6,g7\}$ and two gene additions $\{g8,g9\}$.
Algorithm \ref{a1} summarizes the process of strategy generation.

\begin{figure}
    \centering
    \includegraphics[scale=0.6]{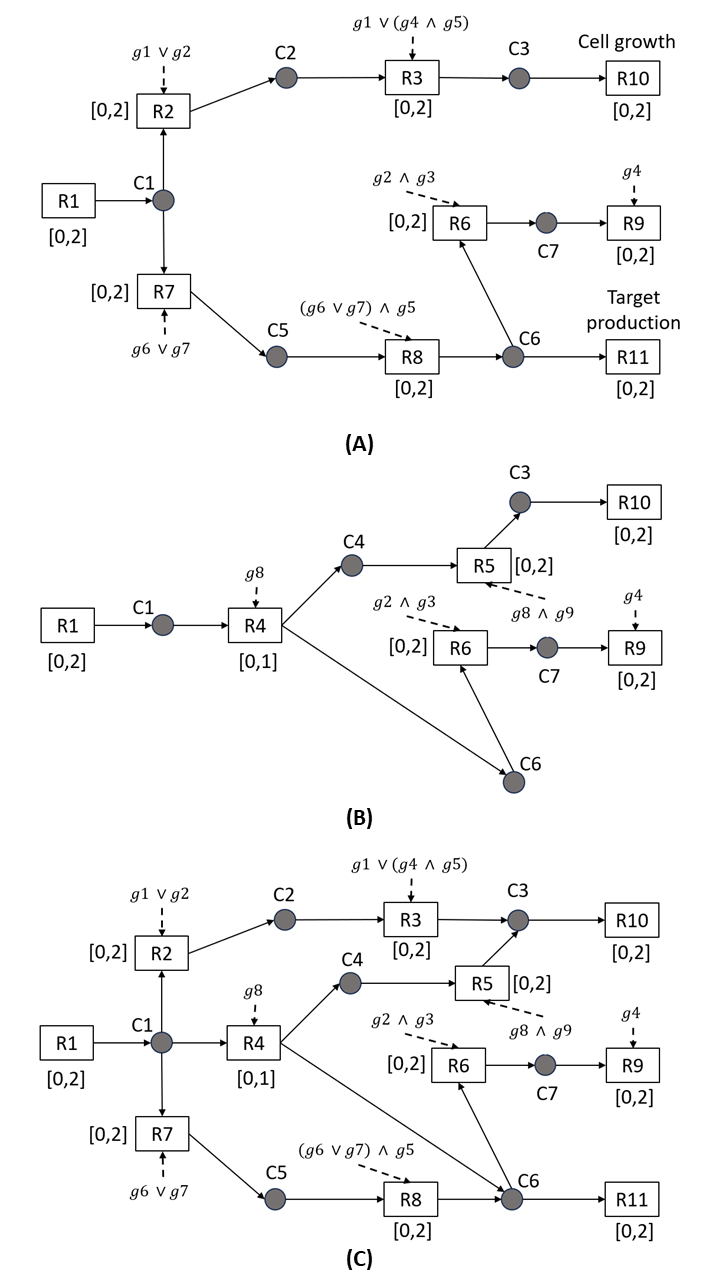}
    \captionsetup{justification=centering}
    \caption{A Toy Constraint-based model. R1 to R11, C1 to C7, and $g1$ to $g9$ are reactions, metabolites, and genes, respectively. Bounds and GPR rules are shown as well. (A) The core model. (B) The edge model. (C) The integrated model of the core model and the edge model.}
    \label{fig:net2}
\end{figure}

\begin{algorithm}
\caption{Strategy Generation}\label{a1}
\begin{algorithmic}[1]
    \renewcommand{\algorithmicrequire}{\textbf{Input:}}
    \renewcommand{\algorithmicensure}{\textbf{Output:}}
    \Require constraint-based model $N$, $v_{target}$, $maxLoop$, $TMGR$
    \Ensure $D$
    \State Initialization: $\alpha \gets 0$, $loop \gets 1$
    \State Form problem $P_{V}^{v_{target}}$, $V$ derives from $(1)$ to $(5)$
    \State $TMPR \gets lp(P_{V}^{v_{target}})$
    \State $gap \gets \frac{TMPR}{lb_{biomass}^{min} \cdot maxLoop}$
    \If {$TMPR > 10^{-3}$}
        \State Form problem $P_{V^{(\ref{sg})}}^{f^{(\ref{sg})}}$ based on $(\ref{sg})$
        \While {$loop \leq maxLoop$}
            \State $\alpha \gets \alpha + gap$ //Update $\alpha$ to $(\ref{sg})$
            \State $pool \gets milp(P_{V^{(\ref{sg})}}^{f^{(\ref{sg})}})$
            \State $D \gets verify(pool)$
            \If {$D \neq \varnothing$}
                \State {\bf return}
            \EndIf
            \State $loop \gets loop + 1$
        \EndWhile
    \EndIf   
\end{algorithmic}
\end{algorithm}

\subsection{Strategy Size Reduction}\label{subsec:sizred}

Algorithm \ref{a4} reduces the sizes of the modification strategy $D$ consisting of deletion $D_d$ and addition $D_a$. 
Algorithm \ref{a4} employs Functions \ref{f1} and \ref{f2} as submodules and its pseudo-code is described below.

\floatname{algorithm}{Algorithm}
\begin{algorithm}[ht]
\caption{Size Reduction for a Modification Strategy}\label{a4}
\begin{algorithmic}[1]
    \renewcommand{\algorithmicrequire}{\textbf{Input:}}
    \renewcommand{\algorithmicensure}{\textbf{Output:}}
    \Require the modification strategy $D$, $h(g)$
    \Ensure $D_{min}$
    \State $D=D_a \cup D_d$ //divide $D$ into addition and deletion
    \For {$i=1$ to $n$} //deletion strategy
        \If {$h_i(D_d)==0$}
            \State {\bf D}$^{i}=MinDel(h_i(g),D_d)$ //Function \ref{f1}
        \EndIf
    \EndFor
    \For {$j=1$ to $n$} //addition strategy
        \If {$h_j(D_a)==1$}
            \State {\bf A}$^{j}=MinAdd(h_j(g),D_a)$ //Function \ref{f2}
        \EndIf
    \EndFor
    \State $D_{min}=min \sum g$
    \State \qquad s.t. \; $r_i=\bigvee (${\bf conjunction} ({\bf D}$^i)), i=1,...n_d$
    \State \qquad \qquad $r_j=\bigvee (${\bf conjunction} ({\bf A}$^j)), j=1,...n_a$
    \State \qquad \qquad $g=\{0,1\}$~~//~1 indicates deletion in {\bf D} and addition in {\bf A}.
    \State \qquad \qquad $r_i,r_j=1$
\end{algorithmic}
\end{algorithm}

\floatname{algorithm}{Function}
\addtocounter{algorithm}{-3}
\begin{algorithm}[ht]
\caption{Size Reduction for Gene Deletions}\label{f1}
\begin{algorithmic}[1]
\Function{MinDel}{$h_{i}(g)$, $D_d$}
    \State {\bf D}$~= D_1 \gets \varnothing$
    \If {$h_i(g)=\bigvee_{t=1}^{T} C_t$}
        \For {$C_1$ to $C_T$}
            \If {clause $C_t$ has a single literal $g_k$ and $g_k \in D_d$}
            \State $D_{z} = D_{z} \cup g_k$ for all $z$
            \ElsIf {clause $C_t$ has multiple literals}
            \State {\bf D} $=$~~{\bf D}~$ \times MinDel(C_t,D_d)$~~/*$\times$ means direct product.
            \EndIf
        \EndFor
    \ElsIf {$h_i(g)=\bigwedge_{t=1}^{T} C_t$}
        \State $len \gets min\{\|MinDel(C_1,D_d)\|,...,\|MinDel(C_T,D_d)\|\}$~~/*$len > 0$
        \For {$C_1$ to $C_T$}
            \If {$\|MinDel(C_t,D_d)\|==len$}
                \State {\bf D}$=$ {\bf D} $\times MinDel(C_t,D_d)$
            \EndIf
        \EndFor
    \ElsIf {$h_i(g)=g_k$ and $g_k \in D_d$}
        \State {\bf D}~$= g_k$
    \EndIf
    \State \Return {\bf D}
\EndFunction
\end{algorithmic}
\end{algorithm}

Function \ref{f1} reduces the size of deletions $D_d$ by considering each reaction forced to be 0 by GPR rules in the core model.
The GPR rule $h_i(g)$ associated to the reaction $v_i$ in a constraint-based model can be categorized into three groups based on its formula: 
\begin{align*}
    (1) h_i(g)=\bigwedge_{t=1}^{T} C_t \\
    (2) h_i(g)=\bigvee_{t=1}^{T} C_t \\
    (3) h_i(g)=g_k.
\end{align*}
$C_t$ denotes a clause, which can be classified into one of the three types described above.
$T$ is the number of clauses. 
Suppose that $h_i(g)$ is Type (1).
For each clause $C_t$ for $v_i$, if the clause is a gene $g_i$, that is Type (3), then $g_k$ is added to the deletion list $D_z$.
If a clause is either Type (1) or (2), Function \ref{f1} is recursively called.
Since Function \ref{f1} may return multiple $D_z$, denoted by {\bf D},
the deletion lists are updated by the direct product operation.
Next, suppose that $h_i(g)$ is Type (2).
Function \ref{f1} is recursively called for each clause and $len$ represents the minimum size of deletions in $MinDel(C_t, D_d)$.
It is to be noted that multiple clauses have the size $len$.
For such clauses, Function \ref{f1} is recursively called and the deletion list {\bf D} is updated by the direct product of {\bf D} and the result of Function \ref{f1}.
Finally, suppose that $h_i(g)$ consists of a single gene $g_k$, that is Type (3).
Then, $g_k$ is added to {\bf D}.
Function \ref{f1} returns {\bf D} as the deletion list with the reduced size.

Similarly, Function \ref{f2} reduces the size of additions $D_a$ by considering reactions forced to be 1 by GPR rules outside the core model.
Suppose that $h_i(g)$ is Type (1).
If a clause $C_t$ consists of a single gene $g_k$, that is Type (3), then $g_k$ is added to the addition list $A_z$.
If a clause is either Type (1) or (2), Function \ref{f2} is recursively called.
Since Function \ref{f2} may return multiple $A_z$, denoted by {\bf A},
the addition lists are updated by the direct product operation.
Next, suppose that $h_i(g)$ is Type (2).
Function \ref{f2} is recursively called for each clause and $len$ represents the minimum size of additions in $MinAdd(C_t, D_a)$.
It is to be noted that multiple clauses have the size $len$.
For such clauses, Function \ref{f2} is recursively called and the addition list {\bf A} is updated by the direct product of {\bf A} and the result of Function \ref{f2}.
Finally, suppose that $h_i(g)$ consists of a single gene $g_k$, that is Type (3).
Then, $g_k$ is added to {\bf A}.
Function \ref{f2} returns {\bf A} as the addition list with the reduced size.

\begin{algorithm}[ht]
\caption{Size Reduction for Gene Additions}\label{f2}
\begin{algorithmic}[1]
\Function{MinAdd}{$h_i(g)$, $D_{a}$}
    \State {\bf A}$ = A_1 \gets \varnothing$
    \If {$h_i(g)=\bigwedge_{t=1}^{T} C_t$}
        \For {$C_1$ to $C_T$}
            \If {clause $C_t$ has a single literal $g_k$ and $g_k \in D_a$}
            \State $A_z = A_z \cup g_k$ for all $z$
            \ElsIf {clause $C_t$ has multiple literals}
               \State {\bf A} = {\bf A} $\times MinAdd(C_t,D_d)$
            \EndIf
        \EndFor
    \ElsIf {$h_i(g)=\bigvee_{t=1}^{T} C_t$}
        \State $len \gets min\{\|MinAdd(C_1,D_a)\|,...,\|MinAdd(C_T,D_a)\|\}$~~//~$len > 0$
        \For {$C_1$ to $C_T$}
            \If {$\|MinAdd(C_t,D_a)\|==len$}
                \State {\bf A} = {\bf A} $\times MinAdd(C_t,D_d)$
            \EndIf
        \EndFor
    \ElsIf {$h_i(g)=g_k$ and $g_k \in D_a$}
        \State {\bf A} $=g_k$
    \EndIf
    \State \Return {\bf A}
\EndFunction
\end{algorithmic}
\end{algorithm}

Finally, Algorighm \ref{a4} conducts integer linear programming (ILP) to obtain the modification strategy with the minimum size using the constraints obtained by Functions \ref{f1} and \ref{f2}.

\section{Computational Experiments}\label{sec:comexp}
All the procedures executed in the computational experiments in this study were implemented on a Ubuntu 20.04 machine with an AMD Ryzen Threadripper3 3970X CPU of 3.70GHz 32C/64T. The running environment is based on IBM ILOG CPLEX 12.10, COBRA Toolbox 2022 and MATLAB R2019b.
An auxiliary exchange reaction was temporarily added to the model to simulate target metabolite production if the target metabolite did not have a production reaction.

%model used
The computational experiments focused on three key aspects: the success rate, the modification size, and the computation time.
All the results were filtered by worst-case analysis, where only the minimum PR was evaluated under the GR maximization for each modification strategy.
We used four different datasets from two major model species, \textit{S. cerevisiae} and \textit{E. coli}, downloaded from the BiGG database  \cite{norsigian2020bigg}.
We constructed six datasets using these four datasets for the computational experiments for Prob-gDel and Prob-gDel-Add. 

Table \ref{t1} represents the details of these six datasets. 
The iMM904 is a constraint-based model for \textit{S. cerevisiae}, 
while both iJR904 and iML1515 are constraint-based models for \textit{E. coli}. 
These three datasets were input for Prob-gDel, aiming to determine the deletion strategy. 
For Prob-gDel-Add, the edge model \textit{E. coli} iJR904 was incorporated into the core model iMM904, while the edge model iND750 of \textit{S. cerevisiae} was combined with the core models iJR904 and iML1515. The resulting integrated datasets exhibited an increase ranging from 300 to 400 in metabolites, 600 to 800 in reactions, and 500 to 600 in genes compared to their original counterparts. 
The efficacy of RatGene was benchmarked against gDel\_minRN and GDLS using these six datasets. 
The allocated computational time for each target metabolite was limited to a specific threshold.
All fluxes with rates $10^{-3}$ or less were treated as having rates equivalent to zero.

\begin{table}[ht]
\captionsetup{justification=centering}
\caption{Datasets}\label{t1}
\centering
\begin{tabular}{ccccc}
\hline
ID & Dataset & Metabolites & Reactions & Genes \\
\hline
1 & iMM904 & 1226 & 1577 & 905 \\
2 & iMM904+iJR904 & 1518 & 2186 & 1506 \\
3 & iJR904 & 761 & 1075 & 904 \\
4 & iJR904+iND750 & 1139 & 1892 & 1419  \\
5 & iML1515 & 1877 & 2712 & 1516 \\
6 & iML1515+iND750 & 2198 & 3560 & 2037 \\
\hline
\end{tabular}
\end{table}

\subsection{deletion-addition Problem: Prob-gDel-Add}\label{sub-ad}
%success rate and number of success
Table \ref{t2} compares the performance of RatGene, gDel\_minRN, and GDLS for Prob-gDel-Add on the three datasets. 
The row labeled ID 1 represents the number of metabolites whose theoretical maximum PR is more than 0.001. 
The rows labeled ID 2 to 4 represent the number of target metabolites for which each method could determine the modification strategy for growth-coupled production. 
The rows labeled ID 5 to 7 represent the success ratio.
RatGene had the highest number of success and success rates on Datasets 4 and 6, where the success rate exceeded 26\% on Dataset 4 and was nearly 40\% on Dataset 6. 
gDel$\_$minRN had the best performance for the number of successes and success rate on Dataset 2. 
The differences in success rates of RateGene and gDel\_minRN are less than 5\% for Datasets 2 and 4.
The success ratio of RatGene was almost double that of gDel\_minRN for Dataset 6.
The success ratio of GDLS was much less than that of the best method across all datasets. 

The rows labeled IDs 8 and 9 represent the size of the union and the intersection of the successful cases by RatGene and gDel\_minRN.
The rows labeled IDs 10 and 11 represent their percentages.
If we executed both RatGene and gDel\_minRN and adopted better solutions, the success rate exceeded 35\%.
The size of the union is almost two to three times the size of the intersection.

\setlength{\tabcolsep}{6pt}
\begin{table}[ht]
\captionsetup{justification=centering}
\caption{Number of Success and Success Ratio(\%)}\label{t2}
\centering
\begin{tabular}{clccc}
\hline
ID&Dataset & iMM904+iJR904 & iJR904+iND750 & iML1515+iND750 \\
 && Dataset 2 & Dataset 4 & Dataset 6 \\ \hline
1&TMPR$>10^{-3}$ & 825 & 569 & 1295 \\
2&RatGene & 270 & \textbf{150} & \textbf{506} \\
3&gDel$\_$minRN  & \textbf{310} & 127 & 285 \\
4&GDLS & 61 & 79 & 33 \\
5&RatGene(\%) & 32.73\% & \textbf{26.36\%} & \textbf{39.07\%} \\
6&gDel$\_$minRN(\%) & \textbf{37.58\%} & 22.32\% & 22.01\% \\
7&GDLS(\%) & 7.39\% & 13.88\% & 2.55\% \\ \hline
8&$^a$Union & 395 & 204 & 593 \\
9&$^b$Intersection & 185 & 73 & 198 \\
10&Union(\%) & 47.88\% & 35.85\% & 45.79\% \\
11&Intersection(\%) & 22.42\% & 12.83\% & 15.29\% \\
\hline
\multicolumn{4}{l}{\footnotesize{$^a$ The union results of RatGene and gDel$\_$minRN.}} \\
\multicolumn{4}{l}{\footnotesize{$^b$ The intersection results of RatGene and gDel$\_$minRN.}} \\
\end{tabular}
\end{table}

%deletion/addition size
Table \ref{t3} represents the average sizes of additions and deletions in the modification strategies derived by the three methods. 
Rows labeled IDs 1 to 3 represent the size of deletions for each method and dataset, while rows labeled IDs 4 to 6 represent the size of additions.
RatGene yielded smaller deletions compared to gDel\_minRN for all three datasets. 
GDLS returned smaller average deletion sizes across all these datasets, but this result is less competitive when considering its low success rates.
Regarding the addition sizes, RatGene consistently returned the smallest average sizes, all below 70. 
In contrast, the smallest average size for additions by gDel\_minRN was 74.46.
Although GDLS had the smallest average deletion sizes, it had much larger average addition sizes when compared to RatGene and gDel\_minRN.
For both RatGene and gDel\_minRN, the average addition sizes were lower than the average deletion sizes for all datasets. 
In contrast, GDLS displayed the reverse results.

\setlength{\tabcolsep}{5pt}
\begin{table}[ht]
\captionsetup{justification=centering}
\caption{Average Modification Size}\label{t3}
\centering
\begin{tabular}{rcccc}
\hline
ID&Dataset & iMM904+iJR904 & iJR904+iND750 & iML1515+iND750 \\
&& Dataset 2 & Dataset 4 & Dataset 6 \\ \hline
1&RatGene($^a$Del) & 553.19 & 608.91 & 1014.25 \\
2&gDel$\_$minRN(Del) & 684.94 & 676.98 & 1127.15 \\
3&GDLS(Del) & 1.07 & 0.75 & 0 \\
4&RatGene($^a$Add) & 67.13 & 31.24 & 50.89 \\
5&gDel$\_$minRN(Add) & 168.74 & 74.46 & 88.14 \\
6&GDLS(Add) & 600.49 & 514.99 & 521.00 \\
\hline
\multicolumn{4}{l}{\footnotesize{$^a$ The size of deletions strategies.}} \\
\multicolumn{4}{l}{\footnotesize{$^b$ The size of addition strategies.}} \\
\end{tabular}
\end{table}

%time
Table \ref{t4} displays the average computation time for each metabolite of all successful cases.
RatGene had the lowest average computational time on Dataset 2, while GDLS was the fastest on the other two datasets. 
In computing the modification strategy in each dataset, the average computational time of RatGene for the successful cases was less than one-third to one-sixth of gDel\_minRN.
Based on the results of the success rates in Table \ref{t2}, RatGene made fast computations while ensuring a relatively good performance in terms of success rates.
\begin{table}[ht]
\captionsetup{justification=centering}
\caption{Average Computational Time of Successes (seconds)}\label{t4}
\centering
\begin{tabular}{ccccc}
\hline
Dataset& & RatGene & gDel$\_$minRN & GDLS \\
\hline
iMM904+iJR904&(Dataset 2) & 162.83 & 541.06 & 197.66 \\
iJR904+iND750&(Dataset 4)& 180.37 & 547.18 & 33.45  \\
iML1515+iND750&(Dataset 6)& 85.36 & 585.66 & 20.39 \\
\hline
\end{tabular}
\end{table}

\subsection{Deletion Problem: Prob-gDel}

\setlength{\tabcolsep}{10pt}
\begin{table}[ht]
\captionsetup{justification=centering}
\caption{Number of Success and Success Ratio(\%)}\label{t5}
\centering
\begin{tabular}{rcccc}
\hline
ID&Dataset & iMM904 & iJR904 & iML1515 \\
&& Dataset 1 &  Dataset 3 & Dataset 5 \\ \hline
1&TMPR$>10^{-3}$ & 782 & 510 & 1092 \\
2&RatGene & \textbf{210} & 38 & \textbf{262} \\
3&gDel$\_$minRN  & 172 & \textbf{224} & 211 \\
4&GDLS & 58 & 33 & 1 \\
5&RatGene(\%) & \textbf{26.85\%} & 7.45\% & \textbf{23.99\%} \\
6&gDel$\_$minRN(\%) & 21.99\% & \textbf{43.92\%} & 19.32\% \\
7&GDLS(\%) & 7.42\% & 6.47\% & 0.09\% \\ \hline
8&$^a$Union & 296 & 236 & 386 \\
9&$^b$Intersection & 86 & 26 & 87 \\ 
10&Union(\%) & 37.85\% & 46.27\% & 35.35\% \\
11&Intersection(\%) & 11.00\% & 5.10\% & 7.97\% \\
\hline
\multicolumn{4}{l}{\footnotesize{$^a$ The union results of RatGene and gDel$\_$minRN.}} \\
\multicolumn{4}{l}{\footnotesize{$^b$ The intersection results of RatGene and gDel$\_$minRN.}} \\
\end{tabular}
\end{table}

%success rate and number of success
Table \ref{t5} compares the performance of RatGene, gDel\_minRN, and GDLS for Problem 2 on the three datasets. 
Similar to Section \ref{sub-ad},
the row labeled ID 1 represents the number of metabolites whose theoretical maximum PR is more than 0.001. 
The rows labeled ID 2 to 4 represent the number of target metabolites for which each method could determine the modification strategy for growth-coupled production. 
The rows labeled ID 5 to 7 represent the success ratio.
RatGene had the highest number of successes and success rates on Datasets 1 and 5, whereas gDel$\_$minRN had the highest on Dataset 3. 
RatGene and GDLS had nearly identical low success rates on Dataset 3. 
GDLS exhibited a notably lower success rate and fewer successes compared to the other two methods across all datasets.
 
The union of successful results generated by RatGene and gDel\_minRN promised to have the lowest success rate of over 35\%. 
The intersection between RatGene and gDel\_minRN only contributed a small amount, indicating that RatGene was efficient in obtaining successful deletion strategies that gDel\_minRN could not find.

%deletion/addition size
Table \ref{t6} represents the average sizes of deletion strategies derived by the three methods. RatGene produced successful deletion strategies with smaller average sizes than gDel\_minRN for all three datasets. In contrast, the average strategy size of GDLS was fewer than ten across all these datasets. For both RatGene and gDel\_minRN, the average deletion sizes for iMM904, iJR904, and iML1515 were lower than the results for the associated integrated datasets in Table \ref{t3}. 
However, the average sizes for deletion strategies generated by GDLS on iMM904, iJR904, and iML1515 were either equal to or larger than those for their respective integrated datasets in Table \ref{t3}.

\setlength{\tabcolsep}{6pt}
\begin{table}[ht]
\captionsetup{justification=centering}
\caption{Average Deletion Size}\label{t6}
\centering
\begin{tabular}{cccc}
\hline
Dataset & iMM904 & iJR904 & iML1515 \\
& Dataset 1 & Dataset 3 & Dataset 5 \\ \hline
RatGene & 480.79 & 409.53 & 865.72 \\
gDel$\_$minRN & 621.88 & 523.31 & 967.33 \\
GDLS & 6.53 & 2.52 & 0 \\
\hline
\end{tabular}
\end{table}

%time
Table \ref{t7} represents the average computation time for each metabolite of all the successes on Datasets 1, 3, and 5. 
GDLS had the lowest average computational time on the datasets, while the success ratio of GDLS is very low. 
The RatGene took less than 50\% of the time to compute the deletion strategy on all three datasets compared to gDel$\_$minRN. 
It was observed that RatGene computed much faster for the deletion-addition strategies in Table \ref{t4} compared to the deletion strategies, while gDel$\_$minRN took longer to compute on the integrated datasets than on its original datasets.
\begin{table}[ht]
\captionsetup{justification=centering}
\caption{Average Computational Time of Successes (seconds)}\label{t7}
\centering
\begin{tabular}{ccccc}
\hline
Dataset& & RatGene & gDel$\_$minRN & GDLS \\
\hline
iMM904&(Dataset 1) & 248.39 & 532.24 & 134.46 \\
iJR904&(Dataset 3)& 241.77 & 520.68 & 43.22 \\
iML1515&(Dataset 5)& 259.26 & 570.28 & 3.72 \\
\hline
\end{tabular}
\end{table}

\section{Discussion and Conclusion}
\subsection{Discussion}\label{subsec:discuss}
Since the performance of RatGene and gDel\_minRN was much better than GDLS,
we focus on RatGene and gDel\_minRN in the following discussion.
RatGene solves Prob-gDel-Add by (1) integrating two constraint-based models and (2) determining gene deletion strategies for the integrated model.
For solving Prob-gDel-Add, (2) of RatGene can be replaced with other algorithms such as gDel\_minRN and GDLS because their purposes are the same.
Tables \ref{t2} to \ref{t4} compare the performance of such three algorithms for Prob-gDel-Add.
Similarly, Prob-gDel can be solved by gDel\_minRN and GDLS, but (2) of RatGene also can solve it.
Tables \ref{t5} to \ref{t7} compare the performance of the three algorithms for solving Prob-gDel.

Tables \ref{t2} and \ref{t5} show that RatGene and gDel\_minRN exhibited strong performance for most datasets regarding both Prob-gDel-Add and Prob-gDel. 
While RatGene outperforms gDel\_minRN for certain datasets, gDel\_minRN is superior for others.
Since the union of their successful cases was large and the intersection was small, RatGene and gDel\_minRN can complement each other effectively.
Therefore, by conducting both RatGene and gDel\_minRN simultaneously, the performance can be significantly improved compared to using each individually.

Tables \ref{t3} and \ref{t6} display that RatGene derived smaller modification strategies than gDel\_minRN. 
Table \ref{t3} indicates that the sizes of both deletions and additions of gDel\_minRN were larger than those of RatGene.
These differences came from the differences in the behavior of gDel\_minRN and (2) of RatGene.
Consequently, both RatGene and gDel\_minRN can complementarily identify effective modification strategies for various target metabolites.
The deletion sized by RatGene and gDel\_minRN may further be reduced by the trial-and-error-based methods proposed in \cite{tamura2022trimming, tamura2023metnetcomp}.

Tables \ref{t4} and \ref{t7} indicate that the computational speed of RatGene was more than three times faster than gDel\_miRN in Prob-gDel-Add, and more than twice as fast in Prob-gDel.
Therefore, running both RatGene and gDel\_minRN simultaneously improved the success rate, but it also increased the computational time by more than threefold.

\begin{table}[t]
\captionsetup{justification=centering}
\caption{Increase in the number of success cases in the integrated models when compared to the original models}\label{t8}
\centering
\begin{tabular}{lccc}
\hline
Dataset & iMM904+iJR904 & iJR904+iND750 & iML1515+iND750 \\
& Dataset 2 &Dataset 4 &Dataset 6 \\ \hline
RatGene & 60 & \textbf{112} & \textbf{244} \\
gDel$\_$minRN & \textbf{138} & -97 & 74 \\
GDLS & 3 & 46 & 32 \\
\hline
TMPR$>10^{-3}$ & 43 & 59 & 203 \\
\hline
\end{tabular}
\end{table}

It is easier to find a growth-coupled production strategy in the integrated model created by $N_1$ and $N_2$ than in the model created by $N_1$ alone.
Table \ref{t8} shows the increase in the number of growth-coupled production strategies identified by each method after the model integration when compared to the original models. 
The last row of Table \ref{t8} shows the number of target metabolites whose TMPRs were greater than zero in the integrated model but zero in the original model. 
On all data sets, the increase of target metabolites for which modification strategies were found by RatGene was larger than the increase of metabolites whose TMPRs were greater than zero. 
This means that RatGene successfully identified efficient modification strategies for many target metabolites for which strategies for growth-coupled production could not be obtained in the original dataset.
Either RatGene or gDel\_minRN exhibited an increase of more than 40 in the number of successful cases compared to the increase in the number of target metabolites with a TMPR 
$> 10^{-3}$ for each dataset.
For Dataset 4, the performance of gDel\_minRN was not as good as in the original Dataset 3. This was due to the time limit set for each metabolite and the integrated models are larger than their original models.
This indicates that RatGene is more efficient for large models than gDel\_minRN.

\begin{table}[ht]
\captionsetup{justification=centering}
\caption{Average computational time by RatGene for all target metabolites with TMPR$>10^{-3}$ (seconds).}\label{t9}
\centering
\begin{tabular}{cccc}
\hline
 & iMM904 & iJR904 & iML1515 \\
\hline
Prob-gDel & 442.36 & \textbf{508.89} & 450.95 \\
Prob-gDel-Add & 223.31 & 245.04 & 136.53 \\
\hline
\end{tabular}
\end{table}

Table \ref{t9} shows the average computational time taken by RatGene for all metabolites with  TMPRs greater than zero.
It shows that the average time for Prob-gDel was approximately 13\% longer on dataset iJR904 compared to the other two datasets.
The average time cost for Prob-gDel was more than double the average time cost for Prob-gDel-Add.
These trends can be attributed to the reason that the computation continued until reaching the upper limit of calculation time when a solution was not found.
RatGene primarily functions by iteratively assigning a fixed value to $\alpha$ to determine its value under growth-coupled production conditions. 
This process increases $\alpha$ from 0 to its maximum within a predefined computation time limit. 
Consequently, the computation might be halted before identifying $\alpha$ if its appropriate value is relatively large. 
For deletion problems in the iJR904 model, the success rate of RatGene was initially 7.45\% as shown in Table \ref{t5}.
This rate significantly increased to 26.36\% after integrating the iND750 model, as shown in Table \ref{t2}. 
The integration altered the network topology and the appropriate value of $\alpha$, leading to a reduced average computation time of 245.04 seconds and an enhanced success rate as shown in Tables \ref{t9} and \ref{t2}.

Figure \ref{fig:fig2} compares the average sizes of addition strategies before and after the size reduction. 
The average sizes of the strategies decreased to approximately 50\%--65\% of their original sizes. 
This demonstrates the effectiveness of the proposed process in compressing the size of the modification strategies.

\begin{figure}[ht]
    \centering
    \includegraphics[scale=0.65]{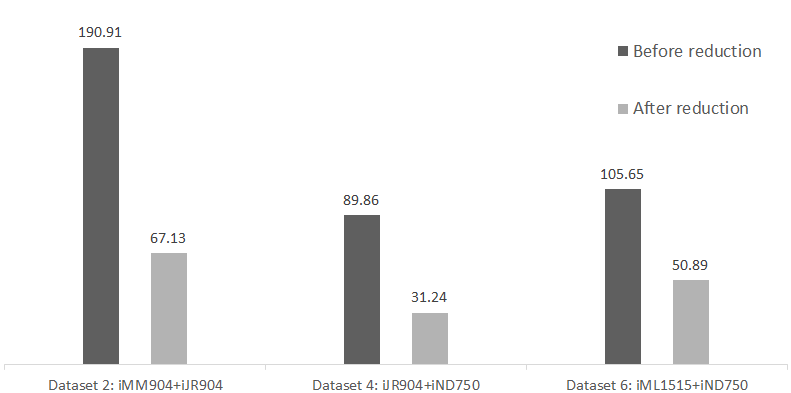}
    \captionsetup{justification=centering}
    \caption{The average size of addition strategies before and after the size reduction.}
    \label{fig:fig2}
\end{figure}

\subsection{Conclusion}\label{subsec:conclu}

In this study, we (1) mathematically defined Prob-gDel and Prob-gDel-Add, (2) proved their NP-completeness, and (3) developed RatGene.
RatGene aims to solve Prob-gDel-Add by integrating two constraint-based models and identifying gene deletion strategies for growth-coupled production. 
RatGene can be used to solve Prob-gDel as well.
The results of the computational experiments showed that RatGene is effective for solving Prob-gDel and Prob-gDel-Add, in particular for target metabolites for which the best existing methods cannot solve.
By using gDel\_minRN and RatGene in a complementary manner, 
the success ratios for Prob-gDel and Prob-gDel-Add can significantly be improved.
RatGene identifies the strategies by growth-to-production ratio-based flux assignments.
This improves the success ratio for identifying the strategies for growth-coupled production while it increases the computation time.

\subsection{Related works}\label{subsec:review}
Numerous studies have been conducted on metabolic design for growth-coupled production.
However, they mainly focus on reaction-level modifications, and did not appropriately consider GPR rules for genome-scale models.
Existing methods can be roughly classified into elementary modes (EMs) and constraint programming-based methods.

The EM-based methods are briefly summarized below.
The CASOP computational framework offers a strategy for strain optimization, using significance measurements of reactions derived from weighted EMs \cite{hadicke2010casop}.
Approach based on Minimal cut sets (MCSs) combining a duality framework in metabolic networks was provided, where the task of counting the MCSs in the original network is simplified to determining the EMs in a dual network \cite{ballerstein2011minimal}.
Constrained MCSs (cMCSs) -based method was proposed to systematically list all comparable gene deletion combinations, supporting the development of effective knockout strategies for coupled product and biomass production \cite{hadicke2011computing}.
Effective preprocessing procedures have been described to be utilized by algorithms that computes cMCSs from EMs, and it was shown how computing MCSs from EMs could be made comparable by a modest adjustment to the integer program \cite{jungreuthmayer2013comparison}.
MCSEnumerator merged two methods to offer the MCS enumerator, which was a novel technique for efficiently enumerating the smallest MCSs by determining shortest EMs in genome-scale metabolic network models with the fewest interventions \cite{von2014enumeration}. 
An approach that focused on how to meaningfully assess and order a set of computed metabolic engineering strategies for growth-coupled product synthesis has also been proposed \cite{schneider2019characterizing}. 
Another work categorized different growth-coupled productions into four groups and extended the MCS framework to calculate the strain designs using the implicit optimality constraints \cite{schneider2021systematizing}.
\\
Constraint programming-based methods are briefly summarized below.
A bilevel programming framework called Optknock was made available for locating knockout strategies \cite{burgard2003optknock}. Various methods based on this bilevel optimization have been proposed subsequently \cite{patil2005evolutionary,pharkya2004optstrain,ranganathan2010optforce}. FastPros offers a unique screening technique in which the potential of a given reaction knockout for the synthesis of a particular metabolite is assessed by the shadow pricing of the constraint in the flux balance analysis \cite{ohno2014fastpros}. GridProd is a grid-based optimization method framework that separates the entire constraint space of the problem into grids and searches for potential deletion strategies in the subspaces \cite{tamura2018grid}. minL1-FMDL, a fast metabolic design listing algorithm, was developed to provide narrowed reaction deletion strategies in polynomial time \cite{tamura2021l1}. Based on this framework, several methods have been proposed to explore the design of metabolic networks under different environmental conditions \cite{tamura2021efficient,ma2021dynamic}.
 \\

\section*{Data Availability}
The developed scripts are available on \href{https://github.com/Ma-Yier/RatGene}{https://github.com/Ma-Yier/RatGene}.

%\bibliography{document}
%This defines the bibliographies style. Search online for a list of available styles.
\bibliographystyle{abbrv}
\bibliography{document}

\begin{thebibliography}{10}

\bibitem{ballerstein2011minimal}
K.~Ballerstein, A.~von Kamp, S.~Klamt, and U.-U. Haus.
\newblock Minimal cut sets in a metabolic network are elementary modes in a dual network.
\newblock {\em Bioinformatics}, 28(3):381--387, 2011.

\bibitem{burgard2003optknock}
A.~P. Burgard, P.~Pharkya, and C.~D. Maranas.
\newblock Optknock: a bilevel programming framework for identifying gene knockout strategies for microbial strain optimization.
\newblock {\em Biotechnology and bioengineering}, 84(6):647--657, 2003.

\bibitem{hadicke2010casop}
O.~H{\"a}dicke and S.~Klamt.
\newblock Casop: a computational approach for strain optimization aiming at high productivity.
\newblock {\em Journal of biotechnology}, 147(2):88--101, 2010.

\bibitem{hadicke2011computing}
O.~H{\"a}dicke and S.~Klamt.
\newblock Computing complex metabolic intervention strategies using constrained minimal cut sets.
\newblock {\em Metabolic engineering}, 13(2):204--213, 2011.

\bibitem{jungreuthmayer2013comparison}
C.~Jungreuthmayer, G.~Nair, S.~Klamt, and J.~Zanghellini.
\newblock Comparison and improvement of algorithms for computing minimal cut sets.
\newblock {\em BMC bioinformatics}, 14(1):1--12, 2013.

\bibitem{lee2015systems}
S.~Y. Lee and H.~U. Kim.
\newblock Systems strategies for developing industrial microbial strains.
\newblock {\em Nature biotechnology}, 33(10):1061, 2015.

\bibitem{lun2009large}
D.~S. Lun, G.~Rockwell, N.~J. Guido, M.~Baym, J.~A. Kelner, B.~Berger, J.~E. Galagan, and G.~M. Church.
\newblock Large-scale identification of genetic design strategies using local search.
\newblock {\em molecular systems biology}, 5(1):296, 2009.

\bibitem{ma2021dynamic}
Y.~Ma and T.~Tamura.
\newblock Dynamic solution space division-based methods for calculating reaction deletion strategies for constraint-based metabolic networks for substance production: Dyncubeprod.
\newblock {\em Frontiers in Bioinformatics}, 1:716112, 2021.

\bibitem{machado2016stoichiometric}
D.~Machado, M.~J. Herrg{\aa}rd, and I.~Rocha.
\newblock Stoichiometric representation of gene--protein--reaction associations leverages constraint-based analysis from reaction to gene-level phenotype prediction.
\newblock {\em PLoS computational biology}, 12(10):e1005140, 2016.

\bibitem{maranas2016optimization}
C.~D. Maranas and A.~R. Zomorrodi.
\newblock {\em Optimization methods in metabolic networks}.
\newblock John Wiley \& Sons, 2016.

\bibitem{nielsen2016engineering}
J.~Nielsen and J.~D. Keasling.
\newblock Engineering cellular metabolism.
\newblock {\em Cell}, 164(6):1185--1197, 2016.

\bibitem{norsigian2020bigg}
C.~J. Norsigian, N.~Pusarla, J.~L. McConn, J.~T. Yurkovich, A.~Dr{\"a}ger, B.~O. Palsson, and Z.~King.
\newblock Bigg models 2020: multi-strain genome-scale models and expansion across the phylogenetic tree.
\newblock {\em Nucleic acids research}, 48(D1):D402--D406, 2020.

\bibitem{ohno2014fastpros}
S.~Ohno, H.~Shimizu, and C.~Furusawa.
\newblock Fastpros: screening of reaction knockout strategies for metabolic engineering.
\newblock {\em Bioinformatics}, 30(7):981--987, 2014.

\bibitem{orth2010flux}
J.~D. Orth, I.~Thiele, and B.~{\O}. Palsson.
\newblock What is flux balance analysis?
\newblock {\em Nature biotechnology}, 28(3):245--248, 2010.

\bibitem{patil2005evolutionary}
K.~R. Patil, I.~Rocha, J.~F{\"o}rster, and J.~Nielsen.
\newblock Evolutionary programming as a platform for in silico metabolic engineering.
\newblock {\em BMC bioinformatics}, 6(1):308, 2005.

\bibitem{peralta2012microbial}
P.~P. Peralta-Yahya, F.~Zhang, S.~B. Del~Cardayre, and J.~D. Keasling.
\newblock Microbial engineering for the production of advanced biofuels.
\newblock {\em Nature}, 488(7411):320--328, 2012.

\bibitem{pharkya2004optstrain}
P.~Pharkya, A.~P. Burgard, and C.~D. Maranas.
\newblock Optstrain: a computational framework for redesign of microbial production systems.
\newblock {\em Genome research}, 14(11):2367--2376, 2004.

\bibitem{ranganathan2010optforce}
S.~Ranganathan, P.~F. Suthers, and C.~D. Maranas.
\newblock Optforce: an optimization procedure for identifying all genetic manipulations leading to targeted overproductions.
\newblock {\em PLoS Comput Biol}, 6(4):e1000744, 2010.

\bibitem{razaghi2020genereg}
Z.~Razaghi-Moghadam and Z.~Nikoloski.
\newblock Genereg: A constraint-based approach for design of feasible metabolic engineering strategies at the gene level.
\newblock {\em Bioinformatics}, 2020.

\bibitem{schneider2019characterizing}
P.~Schneider and S.~Klamt.
\newblock Characterizing and ranking computed metabolic engineering strategies.
\newblock {\em Bioinformatics}, 35(17):3063--3072, 2019.

\bibitem{schneider2021systematizing}
P.~Schneider, R.~Mahadevan, and S.~Klamt.
\newblock Systematizing the different notions of growth-coupled product synthesis and a single framework for computing corresponding strain designs.
\newblock {\em Biotechnology Journal}, 16(12):2100236, 2021.

\bibitem{schuster1994elementary}
S.~Schuster and C.~Hilgetag.
\newblock On elementary flux modes in biochemical reaction systems at steady state.
\newblock {\em Journal of Biological Systems}, 2(02):165--182, 1994.

\bibitem{sipser1996introduction}
M.~Sipser.
\newblock Introduction to the theory of computation.
\newblock {\em ACM Sigact News}, 27(1):27--29, 1996.

\bibitem{tamura2018grid}
T.~Tamura.
\newblock Grid-based computational methods for the design of constraint-based parsimonious chemical reaction networks to simulate metabolite production: Gridprod.
\newblock {\em BMC bioinformatics}, 19(1):325, 2018.

\bibitem{tamura2021efficient}
T.~Tamura.
\newblock Efficient reaction deletion algorithms for redesign of constraint-based metabolic networks for metabolite production with weak coupling.
\newblock {\em IPSJ Transactions on Bioinformatics}, 14:12--21, 2021.

\bibitem{tamura2021l1}
T.~Tamura.
\newblock L1 norm minimal mode-based methods for listing reaction network designs for metabolite production.
\newblock {\em IEICE TRANSACTIONS on Information and Systems}, 104(5):679--687, 2021.

\bibitem{tamura2022trimming}
T.~Tamura.
\newblock Trimming gene deletion strategies for growth-coupled production in constraint-based metabolic networks: Trimgdel.
\newblock {\em IEEE/ACM Transactions on Computational Biology and Bioinformatics}, 2022.

\bibitem{tamura2023metnetcomp}
T.~Tamura.
\newblock Metnetcomp: Database for minimal and maximal gene-deletion strategies for growth-coupled production of genome-scale metabolic networks.
\newblock {\em IEEE/ACM Transactions on Computational Biology and Bioinformatics}, 2023.

\bibitem{tamura2023gene}
T.~Tamura, A.~Muto-Fujita, Y.~Tohsato, and T.~Kosaka.
\newblock Gene deletion algorithms for minimum reaction network design by mixed-integer linear programming for metabolite production in constraint-based models: gdel\_minrn.
\newblock {\em Journal of Computational Biology}, 2023.

\bibitem{von2014enumeration}
A.~von Kamp and S.~Klamt.
\newblock Enumeration of smallest intervention strategies in genome-scale metabolic networks.
\newblock {\em PLoS computational biology}, 10(1):e1003378, 2014.

\bibitem{von2017growth}
A.~von Kamp and S.~Klamt.
\newblock Growth-coupled overproduction is feasible for almost all metabolites in five major production organisms.
\newblock {\em Nature communications}, 8:15956, 2017.

\end{thebibliography}

%This is where your bibliography is generated. Make sure that your .bib file is actually called 

\end{document}